\documentclass{llncs}

\usepackage{amsmath}
\usepackage{amssymb}
\usepackage{algorithm}
\usepackage{algorithmic}
\usepackage{multirow}
\usepackage[colorlinks,linkcolor=black,anchorcolor=black,citecolor=black]{hyperref}
\usepackage{comment}

\newcommand{\CONTINUE}{\textbf{continue}}
\newcommand{\BREAK}{\textbf{break}}
\newcommand{\bm}[1]{\boldsymbol{#1}}
\newcommand{\code}[1]{\texttt{#1}}

\newtheorem{thm}{Theorem}
\newtheorem{prop}[thm]{Proposition}
\newtheorem{eg}{Example}

\newcommand{\mmid}{\mathrm{mid}}
\newcommand{\rad}{\mathrm{rad}}
\newcommand{\midrad}{\mathrm{midrad}}

\bibliographystyle{ref/chinesebst-mod}

\begin{document}

\title{Real Root Isolation of Polynomial Equations Based on Hybrid Computation}
\author{Fei Shen\inst{1}\and Wenyuan Wu\inst{2} \and Bican Xia\inst{1}}
\institute{LMAM \& School of Mathematical Sciences, Peking University
\and Chongqing Institute of Green and Intelligent Technology, Chinese Academy of Sciences\\
 \email{shenfei@pku.edu.cn ~ wuwenyuan@cigit.ac.cn ~ xbc@math.pku.edu.cn} }
\date{}
\maketitle

\begin{abstract}
A new algorithm for real root isolation of polynomial equations
based on hybrid computation is presented in this paper.
Firstly, the approximate (complex) zeros of the given polynomial equations
are obtained via homotopy continuation method. Then, for each approximate zero,
an initial box relying on the Kantorovich theorem is constructed,
which contains the corresponding accurate zero.
Finally, the Krawczyk interval iteration with interval arithmetic
is applied to the initial boxes so as to check whether or not
the corresponding approximate zeros are real and to obtain the
real root isolation boxes.
Meanwhile, an empirical construction of initial
box is provided for higher performance. Our experiments on many benchmarks show
that the new hybrid method is more efficient, compared with the traditional symbolic approaches.
\end{abstract}
\keywords {Polynomial equations, real root isolation, hybrid computation.}

\section{Introduction}

The \emph{Real Roots Isolation} of polynomial equations is a procedure that
uses disjoint regions to isolate all the distinct real roots of polynomial equations,
with only one root in each region.
Formally speaking, let $\bm{F} = (f_1,f_2,\dots ,f_n)^{T}$ be polynomial equations defined on $\bm{R}^{n}$,
i.e. $f_i\in \bm{R}[x_1,x_2,\dots,x_n]$.
Suppose
    $\bm{F}(\bm{x}) = 0$
has only finite many real roots, say $\xi^{(1)},\xi^{(2)},\dots,\xi^{(m)}$.
The target of real root isolation is to compute a family of regions $S_1,S_2,\dots,S_m,\ S_j\subset\bm{R}^{n} (1\le j\le m)$,
such that $\xi^{(j)}\in S_j$ and $S_i\cap S_j = \emptyset\ (1\le i,j\le m)$.
Usually, we use rectangular boxes to denote the regions above.
So we often call these isolated boxes \emph{intervals} in this paper. Theoretically,
the width of intervals for some special problems can be very small.
Hence, we assume that the accuracy of numerical computation in this
paper can be arbitrarily high. However, it is also important to
point out that such case rarely happens and double-precision is
usually enough to obtain very small intervals in practice.

Real root isolation is an important problem in symbolic computation.
It can be viewed as a kind of exact algorithm for solving equations
since no root formula is available in general situation.
It is also a critical part of some other important algorithms,
such as CAD and real root classification for semi-algebraic systems, etc.
Improvement on real root isolation will benefit all of these algorithms.

We impose some hypothesis on the problem discussed here.
First is that the system is square,
i.e. the number of equations is the same as that of variables.
Then we only handle the systems with finite many roots.
Positive dimensional solution is beyond the scope of this paper.
Moreover, we suppose that the Jacobian matrix of $\bm{F}$ is non-singular
at each root of $\bm{F}(\bm{x}) = 0$.
So we only deal with the simple root cases.
For the singular situation, the deflation method \cite{Leykin2006111,deflation,wuzhiISS08,DaytonLiZeng2011}
can be applied, which is one of our ongoing work.

Most of the previous real root isolation algorithms are based on
symbolic computations. For instance, the Uspensky
algorithm \cite{RealZerosOfPoly} based on Descartes' rule is for
polynomials in one variable. In multi-variable scenario, we have
``First algorithm'' based on monotonicity \cite{xia2002} and ``Second algorithm''
based on ``upper-lower bound'' polynomial \cite{zhang2006}. There are
also some other algorithms based on different techniques, see for example
\cite{gao07,BCLM09,springerlink:10.1007/s002000050114,Cheng2012843}.

An advantage of those symbolic methods is that exact results can be obtained since they use symbolic
computation and some of them can be extended to semi-algebraic
systems. However, there are also some disadvantages. Some of these
method could only handle the isolation of complex zeros. And some of
them need to triangularize the system first, which is unacceptable
in computation when the system is complicated sometimes, such as
more variables or high degrees. While some methods that do not use
triangularization have to give a huge initial interval to include
all the real roots \cite{zhangting,zhangxiaoxia}, which is extremely
inefficient.

In order to avoid these problems and design a new algorithm that could efficiently solve more complicated systems
and provide accurate interval results, we employ hybrid computation to take both the advantages of symbolic and numerical methods.

The basic idea of this paper is to use numerical method
to obtain all the approximate zeros of polynomial systems, including possible non-real ones.
With these approximations, small initial intervals which contains the corresponding real roots are constructed.
We then apply symbolic method to these initial intervals to verify whether there is a real root in it or not.
The main method we use in numerical computation is homotopy continuation, and for symbolic process we use the Krawczyk iteration.

Most of the work in this paper comes from \cite{shenfei}.
In Section 2, we will introduce some preliminaries, including
homotopy continuation and interval arithmetic. A new real root
isolation algorithm is discussed in Section 3. To test our new
method, our experimental results on benchmarks together with comparison and analysis
will be presented in Section 4. Finally, there is a summary in Section 5 and some future work will also be discussed.

\section{Preliminary}

We introduce in this section some basic theories and tools
that would be used in our algorithm.

\subsection{Homotopy Continuation Method}

Homotopy continuation method is an important numerical computation method,
which is used in various fields.
We only treat it as an ``algorithm black box'' here,
where the input is a polynomial system, and the output is its approximate zeros.
Please find the details about the theory in \cite{homLi,homotopy}.

For our purpose, it is convenient to utilize some existing software,
such as Hom4ps-2.0 \cite{hom4ps}, PHCpack \cite{PHCpack} and HomLab \cite{homlab}.

In our implementation, we use Hom4ps-2.0, which could return all the approximate complex zeros
of a given polynomial system efficiently, along with residues and condition numbers.

\subsection{Interval arithmetic}

Interval arithmetic plays an important role
in real root isolation algorithms \cite{zhangting,zhangxiaoxia,zhang2006}.
The two main differences between our new algorithm and the traditional ones in \cite{zhangting,zhangxiaoxia} are:
1) Verification only carry out on the localized ``small'' intervals; 2) symbolic computation
is replaced with floating point numerical computation.

Most of the interval operations in this paper's algorithms are based on Rump's floating
point verification work \cite{Rump} and accomplished
by using the Matlab package Intlab \cite{intlab},
including interval arithmetic operations and Jacobian matrix, Hessian matrix calculations.\footnote[1]{
    See reference \cite{Rump}, Section 11, \emph{Automatic differentiation}.}

\subsubsection{Basic concepts}
We introduce some basic interval arithmetic theories in this section.
See reference \cite{interval} for more details.

For given numbers $\underline{x},\overline{x}\in\bm{R}$,
if $\underline{x}\le\overline{x}$, we call
\begin{displaymath}
    X = [\underline{x},\overline{x}] = \{x\in\bm{R}|\underline{x}\le x \le\overline{x}\}
\end{displaymath}
a \emph{bounded closed interval}, or \emph{interval} for short.
Denote by $I(\bm{R})$ the set of all the bounded close intervals on $\bm{R}$,
and $I(A) = \{ X\in I(\bm{R}) | X\subseteq A\}$ all the intervals on $A\subseteq\bm{R}$.
Especially, if $\underline{x} = \overline{x}$, we call $X$ a \emph{point interval}.

For intervals, there are some common quantities:
\begin{description}
    \item[midpoint] $\mmid(X) = (\underline{x} + \overline{x})/2$
    \item[width] $\mathrm{W}(X) = \overline{x} - \underline{x}$
    \item[radius] $\rad(X) = \frac{1}{2}\mathrm{W}(X)$
    \item[low end point] $\inf(X) = \underline{x}$
    \item[high end point] $\sup(X) = \overline{x}$
\end{description}
Obviously we have $X = [\mmid(X)-\rad(X),\mmid(X)+\rad(X)]$.
An interval is usually expressed by its midpoint and radius.
For example, if $m = \mmid(X)$, $r = \rad(X)$, then we can write the formula above as
$X = \midrad(m,r)$.

We can also define the arithmetic operations over intervals.
Let $X = [\underline{x},\overline{x}],Y = [\underline{y},\overline{y}]\in I(\bm{R})$,
\begin{itemize}
    \item[-] $X + Y = [\underline{x} + \underline{y}, \overline{x} + \overline{y}]$
    \item[-] $X - Y = [\underline{x} - \overline{y}, \overline{x} - \underline{y}]$
    \item[-] $X\cdot Y =
        [\min(\underline{x}\underline{y},\underline{x}\overline{y}, \overline{x}\underline{y}, \overline{x}\overline{y}),
        \max(\underline{x}\underline{y}, \underline{x}\overline{y}, \overline{x}\underline{y}, \overline{x}\overline{y})]$
    \item[-] $X / Y = [\underline{x},\overline{x}]\cdot [1/\overline{y},1/\underline{y}],\ 0\not\in Y$
\end{itemize}

A vector is called an \emph{interval vector} if all its components are intervals.
\emph{Interval matrix} can be similarly defined.
For interval vectors and interval matrices, the concepts such as midpoint, width, radius, etc,
and the arithmetic operations are defined in components.

Let $f:\bm{R}^{n}\rightarrow \bm{R}$ be a function,
if there exists an interval map
\begin{displaymath}
    F:I(\bm{R}^{n})\rightarrow I(\bm{R})
\end{displaymath}
such that for all $x_i\in X_i (i = 1,2,\dots,n)$,
\begin{displaymath}
    F([x_1,x_1],[x_2,x_2],\dots,[x_n,x_n]) = f(x_1,x_2,\dots,x_n)
\end{displaymath}
holds, then we call $F$ an \emph{interval expand} of $f$.

We call $F:I(\bm{R}^{n})\rightarrow I(\bm{R})$ an interval map with \emph{inclusive monotonicity}
if $\bm{X}\subseteq\bm{Y}$ implies $F(\bm{X}) \subseteq F(\bm{Y})$
for any given intervals $\bm{X}$ and $\bm{Y}$.
The definitions above can all be extended to the situations in $I(\bm{R}^{n})\rightarrow I(\bm{R}^{n})$.
And it is easy to prove that all the polynomial operations satisfy the inclusive monotonicity.

\subsubsection{Krawczyk operator}\label{section:K-intro}

The Krawczyk operator plays a key role in the real root verification of interval arithmetic.
The main accomplishment comes from the work of Krawczyk and Moore.
We only list some important results here. Complete proofs can be found in \cite{interval}.

Suppose $\bm{f}:D\subseteq\bm{R}^{n}\rightarrow\bm{R}^{n}$ is continuous differentiable on $D$.
Consider the equation
\begin{eqnarray}\label{eq:eqn}
    \bm{f}(\bm{x}) = 0.
\end{eqnarray}
Let $\bm{f}'$ be the Jacobi matrix of $\bm{f}$,
$\bm{F}$ and $\bm{F}'$ be the interval expand of $\bm{f}$ and $\bm{f}'$ with inclusive monotonicity, respectively.
For $\bm{X}\in I(D)$ and any $\bm{y}\in\bm{X}$, define the \emph{Krawczyk operator} as:
\begin{eqnarray}\label{eq:K1}
    K(\bm{y},\bm{X}) = \bm{y} - Y\bm{f}(\bm{y}) + (I - Y\bm{F}'(\bm{X}))(\bm{X} - \bm{y})
\end{eqnarray}
where $Y$ is any $n\times n$ non-singular matrix.

Especially, we assign $\bm{y} = \mmid(\bm{X})$, so Formula (\ref{eq:K1}) becomes
\begin{eqnarray}\label{eq:K2}
    K(\bm{X}) = \mmid(\bm{X}) - Y\bm{f}(\mmid(\bm{X})) + (I - Y\bm{F}'(\bm{X}))\rad(\bm{X})[-1,1].
\end{eqnarray}
Formula (\ref{eq:K2}) is often used in practice.

The reason why the Krawczyk operator is so important is that it has some nice properties.
\begin{prop}\label{thm:Kprop}
Suppose $K(\bm{y},\bm{X})$ is defined as Formula (\ref{eq:K1}), then
\begin{enumerate}
    \item If $\bm{x}^{*}\in\bm{X}$ is a root of Equation (\ref{eq:eqn}),
        then for any $\bm{y}\in\bm{X}$, we have $\bm{x}^{*}\in K(\bm{y},\bm{X})$;
    \item For any $\bm{y}\in\bm{X}$, if $\bm{X}\cap K(\bm{y},\bm{X}) = \emptyset$ holds,
        then there is no roots in $\bm{X}$;
    \item For any $\bm{y}\in\bm{X}$ and any non-singular matrix $Y$, if $K(\bm{y},\bm{X})\subseteq\bm{X}$ holds,
        then Equation (\ref{eq:eqn}) has a solution in $\bm{X}$;
    \item Moreover, for any $\bm{y}\in\bm{X}$ and any non-singular matrix $Y$,
        if $K(\bm{y},\bm{X})$ is strict inclusive in $\bm{X}$, then Equation (\ref{eq:eqn}) has only one root in $\bm{X}$.
\end{enumerate}
\end{prop}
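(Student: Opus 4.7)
The plan is to derive all four items from the mean-value identity
$$\bm{f}(\bm{x}) - \bm{f}(\bm{y}) \;=\; J(\bm{y},\bm{x})\,(\bm{x}-\bm{y}), \qquad J(\bm{y},\bm{x}):=\int_0^1 \bm{f}'(\bm{y}+t(\bm{x}-\bm{y}))\,\ud t,$$
valid for all $\bm{x},\bm{y}\in\bm{X}$. Because the box $\bm{X}$ is convex, the entire segment $[\bm{y},\bm{x}]$ lies in $\bm{X}$; together with inclusive monotonicity of $\bm{F}'$ this yields the single structural fact $J(\bm{y},\bm{x})\in\bm{F}'(\bm{X})$. Every item below is a short consequence of this.

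For item (1), set $\bm{x}=\bm{x}^*$ and use $\bm{f}(\bm{x}^*)=0$ to rewrite the identity as
$$\bm{x}^* \;=\; \bm{y} - Y\bm{f}(\bm{y}) + \bigl(I - Y\,J(\bm{y},\bm{x}^*)\bigr)(\bm{x}^*-\bm{y}),$$
which lies in $K(\bm{y},\bm{X})$ because $J(\bm{y},\bm{x}^*)\in\bm{F}'(\bm{X})$ and $\bm{x}^*-\bm{y}\in\bm{X}-\bm{y}$. Item (2) is then just the contrapositive of (1): a root in $\bm{X}$ would populate $\bm{X}\cap K(\bm{y},\bm{X})$. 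For item (3), I would introduce the Newton-style self-map $g(\bm{x}):=\bm{x}-Y\bm{f}(\bm{x})$ and use the same identity to obtain $g(\bm{x}) = \bm{y}-Y\bm{f}(\bm{y}) + (I-Y\,J(\bm{y},\bm{x}))(\bm{x}-\bm{y}) \in K(\bm{y},\bm{X}) \subseteq \bm{X}$. The map $g$ is continuous on the compact convex set $\bm{X}$, so Brouwer's fixed-point theorem supplies $\bm{x}^*\in\bm{X}$ with $g(\bm{x}^*)=\bm{x}^*$, i.e.\ $Y\bm{f}(\bm{x}^*)=0$; non-singularity of $Y$ finishes the argument.

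The main obstacle is item (4), where strict inclusion must be upgraded to a contraction-style uniqueness bound. The plan is to interpret ``strict inclusion'' component-wise in the standard interval-analytic sense ($K(\bm{y},\bm{X})$ lies in the interior of $\bm{X}$), from which
$$\rad\bigl((I - Y\bm{F}'(\bm{X}))(\bm{X}-\bm{y})\bigr) \;<\; \rad(\bm{X})$$
follows component-wise. By the standard radius estimate for the product of an interval matrix with a centred interval vector, this forces a uniform bound $\|I-YA\|_\infty<1$ for every $A\in\bm{F}'(\bm{X})$, so $YA$, and hence $A$, is invertible for each such $A$. If two roots $\bm{x}^*_1,\bm{x}^*_2\in\bm{X}$ existed, the mean-value identity would give $J(\bm{x}^*_1,\bm{x}^*_2)(\bm{x}^*_1-\bm{x}^*_2)=0$ with $J(\bm{x}^*_1,\bm{x}^*_2)\in\bm{F}'(\bm{X})$ invertible, forcing $\bm{x}^*_1=\bm{x}^*_2$. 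The delicate point here is the precise passage from componentwise strict inclusion to the uniform norm bound on $I-Y\bm{F}'(\bm{X})$; I would invoke the classical Krawczyk--Moore lemma in \cite{interval} for that step rather than reprove it, since the rest of the proof only uses this consequence as a black box.
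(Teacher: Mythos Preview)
The paper does not actually prove Proposition~\ref{thm:Kprop}; it only states the four properties and refers the reader to \cite{interval} for complete proofs. So there is no in-paper argument to compare against, and your write-up already goes further than the paper itself.

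Your arguments for items (1)--(3) are correct and are the standard ones: the integral mean-value identity produces a real matrix $J(\bm{y},\bm{x})\in\bm{F}'(\bm{X})$, item (1) is then a direct substitution, (2) is its contrapositive, and (3) is Brouwer's theorem applied to the continuous self-map $g(\bm{x})=\bm{x}-Y\bm{f}(\bm{x})$ of the compact convex box $\bm{X}$.

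For item (4) your overall strategy---show that every $A\in\bm{F}'(\bm{X})$ is nonsingular and then apply the mean-value identity to two hypothetical roots---is the right one. However, the specific intermediate claim you write, that strict inclusion forces $\|I-YA\|_\infty<1$ for all $A\in\bm{F}'(\bm{X})$, is false in general. Take $\bm{X}=[-10,10]\times[-1,1]$, $\bm{y}=\mmid(\bm{X})=0$, $Y=I$, and $\bm{f}(x_1,x_2)=(x_1+x_2^{2},\,x_2)$. Then $\bm{F}'(\bm{X})=\bigl(\begin{smallmatrix}1&[-2,2]\\0&1\end{smallmatrix}\bigr)$ and one checks $K(0,\bm{X})=[-2,2]\times\{0\}\subset\mathrm{int}(\bm{X})$, yet for $A=\bigl(\begin{smallmatrix}1&2\\0&1\end{smallmatrix}\bigr)\in\bm{F}'(\bm{X})$ we have $\|I-A\|_\infty=2$. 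What strict inclusion does yield (at least when $\bm{y}=\mmid(\bm{X})$) is the componentwise vector inequality $|I-Y\bm{F}'(\bm{X})|\,\rad(\bm{X})<\rad(\bm{X})$, which forces the spectral radius of $|I-Y\bm{F}'(\bm{X})|$, and hence of every $I-YA$ with $A\in\bm{F}'(\bm{X})$, to be strictly below $1$; nonsingularity of $YA$ follows from that spectral bound, not from an $\infty$-norm bound. Equivalently, work in the weighted max-norm $\|v\|:=\max_i|v_i|/\rad(\bm{X})_i$. Since you already intended to cite \cite{interval} at exactly this step, the fix is simply to quote the correct conclusion (regularity of $\bm{F}'(\bm{X})$ via the spectral-radius argument) rather than the $\infty$-norm inequality.
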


With the properties above, we can easily develop a real root verification method,
which is a little different from the classical one, and will be explained later in this paper.

Meanwhile, with the hypothesis we set in introduction,
all the systems considered here are non-singular ones with only simple roots.
So the Jacobian matrix at the zeros are all invertible.
Thus, we often set $Y = (\mmid\bm{F}'(\bm{X}))^{-1}$ and the Krawczyk operator becomes
\begin{eqnarray}\label{eq:K-Moore}
    K(\bm{X}) & = & \mmid(\bm{X}) - (\mmid \bm{F}'(\bm{X}))^{-1}\bm{f}(\mmid(\bm{X}))\nonumber\\
     & & + (I - (\mmid \bm{F}'(\bm{X}))^{-1}\bm{F}'(\bm{X}))\rad(\bm{X})[-1,1].
\end{eqnarray}
This is also called the Moore form of the Krawczyk operator.

\section{Real root isolation algorithm}

In this section, we will present our new algorithm for real root isolation
based on hybrid computation.
As mentioned before, our idea is to construct the initial intervals
corresponding to the approximate zeros obtained by homotopy continuation,
then verify them via the Krawczyk interval iteration to obtain the isolation results.
In the end, we combine these sub-procedures to give the final algorithm description.

\subsection{Construction of initial intervals}

To apply the Krawczyk interval iteration,
obviously the construction of initial intervals
is a key procedure.
We should guarantee both the \emph{correctness} and \emph{efficiency},
that is, make sure the initial box contains the corresponding accurate real root,
and keep the interval radius as small as possible so as to shorten the iteration time.

Thus a valid error estimation for the initial approximate zeros
should be established.
And we discuss this issue in both theory and practice aspects here.

\subsubsection{Error estimation theory}

The core problem of the construction of initial box
is the choice of interval radius,
which is indeed an error estimation for the approximate zero.
There are dozens of error analysis for this question,
from classic results to modern ones, especially about the Newton method.
For example, in \cite{ComplexityAndRealComputation},
S. Smale et al. gave a detailed analysis.
However, their method requires computation of high order derivatives,
which is not so convenient for our problem.

Here we employ the Kantorovich Theorem to give our error estimation.

\begin{thm}[Kantorovich]\label{thm:Kan}
Let $X$ and $Y$ be Banach spaces and $F:D\subseteq X\rightarrow Y$ be an operator,
which is Fr\'echet differentiable on an open convex set $D_0\subseteq D$.
For equation $F(x) = 0$, if the given approximate zero $x_0\in D_0$ meets the following three conditions:
\begin{enumerate}
    \item $F'(x_0)^{-1}$ exists, and there are real numbers $B$ and $\eta$ such that
        \begin{displaymath}
        \| F'(x_0)^{-1} \| \le B, \quad \| F'(x_0)^{-1}F(x_0) \| \le \eta,
        \end{displaymath}
    \item $F'$ satisfies the Lipschitz condition on $D_0$:
        \begin{displaymath}
        \| F'(x)-F'(y) \| \le K\| x - y \|, \ \forall x,y\in D_0,
        \end{displaymath}
    \item $h = BK\eta \le \frac{1}{2}, \  O(x_0,\frac{1 - \sqrt{1-2h}}{h}\eta)\subset D_0,$
\end{enumerate}
then we claim that:
\begin{enumerate}
  \item $F(x) = 0$ has a root $x^{*}$ in $\overline{O(x_0,\frac{1 - \sqrt{1-2h}}{h}\eta)}\subset \overline{D_0}$,
        and the sequence $\{x_k:x_{k+1} = x_k - F'(x_k)^{-1}F(x_k)\}$ of Newton method
        converges to $x^{*}$;
  \item For the convergence of $x^{*}$, we have:
         \begin{eqnarray}
            \| x^{*} - x_{k+1} \| \le
            \frac{\theta^{2^{k+1}}(1-\theta^{2})}{\theta(1-\theta^{2^{k+1}})}\eta
        \end{eqnarray}
        where $\theta = \frac{1 - \sqrt{1-2h}}{1 + \sqrt{1-2h}}$;
  \item The root $x^{*}$ is unique in $\overline{D_0}\cap \overline{O(x_0,\frac{1 + \sqrt{1-2h}}{h}\eta)}$.
\end{enumerate}
\end{thm}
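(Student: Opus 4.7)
The plan is to prove this via the classical technique of \emph{majorizing sequences}, which reduces the Banach-space Newton iteration to a scalar Newton iteration on an auxiliary quadratic. Specifically, I would introduce the majorant $p(t) = \tfrac{K}{2}t^2 - t/B + \eta/B$, whose two non-negative roots under $h = BK\eta \le \tfrac{1}{2}$ are precisely $t^{*} = \frac{1-\sqrt{1-2h}}{h}\eta$ and $t^{**} = \frac{1+\sqrt{1-2h}}{h}\eta$, the radii that appear in the theorem. Running Newton's method on $p$ from $t_0 = 0$ produces a monotone increasing sequence $\{t_k\}$ with $t_1 = \eta$ converging to $t^{*}$; since $p$ is a quadratic, the $t_k$ admit a closed form that directly yields the explicit rate involving $\theta = \frac{1-\sqrt{1-2h}}{1+\sqrt{1-2h}}$.

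The heart of the argument is a simultaneous induction on $k$ establishing three claims: (i) $x_k \in \overline{O(x_0, t^{*})} \subseteq D_0$; (ii) $F'(x_k)$ is invertible with $\|F'(x_k)^{-1}\| \le 1/|p'(t_k)|$; and (iii) $\|x_{k+1} - x_k\| \le t_{k+1} - t_k$, where $x_{k+1} := x_k - F'(x_k)^{-1}F(x_k)$. The inductive step combines three ingredients: the Banach perturbation lemma applied with the Lipschitz estimate $\|F'(x_k) - F'(x_0)\| \le K t_k$ to obtain (ii); the Taylor-type identity
\begin{displaymath}
F(x_{k+1}) = \int_0^1 \bigl( F'(x_k + s(x_{k+1}-x_k)) - F'(x_k) \bigr)(x_{k+1} - x_k)\,\mathrm{d}s
\end{displaymath}
combined with Lipschitz continuity to give $\|F(x_{k+1})\| \le \tfrac{K}{2}\|x_{k+1}-x_k\|^2$; and the analogous scalar identity $p(t_{k+1}) = \tfrac{K}{2}(t_{k+1}-t_k)^2$, which lets one dominate the $F$-Newton step by the $p$-Newton step and yield (iii). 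Claim (i) then follows by telescoping the increments against $\sum_j (t_{j+1}-t_j) \le t^{*}$.

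With these in hand, the Cauchy property of $\{t_k\} \subset \bm{R}$ transfers to $\{x_k\}$ via $\|x_m - x_k\| \le t_m - t_k$, so $\{x_k\}$ converges to some $x^{*} \in \overline{O(x_0, t^{*})}$; passing to the limit in $F(x_k) + F'(x_k)(x_{k+1} - x_k) = 0$ and using continuity of $F$ give $F(x^{*}) = 0$, while the bound $\|x^{*} - x_{k+1}\| \le t^{*} - t_{k+1}$ together with the closed form for $t_k$ produces the quantitative rate in item~2. For uniqueness in the larger ball, I would suppose $F(y^{*}) = 0$ for another zero in the stated set, write
\begin{displaymath}
0 = F(y^{*}) - F(x^{*}) = \Bigl(\int_0^1 F'\bigl(x^{*} + s(y^{*} - x^{*})\bigr)\,\mathrm{d}s\Bigr)(y^{*} - x^{*}),
\end{displaymath}
and show the averaged Jacobian is invertible by a second majorizing argument against the larger root $t^{**}$, forcing $y^{*} = x^{*}$.

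The main obstacle I anticipate is the bookkeeping in the simultaneous induction: one needs a \emph{sharp} bound on $\|F'(x_k)^{-1}\|$ — not mere boundedness, but precisely $1/|p'(t_k)|$ — so that the $F$-Newton increment is actually majorized term-by-term by the $p$-Newton increment. Lining up the constants so that the two quadratic remainders match and the Banach lemma delivers exactly the coefficient required is the delicate bit; once this is in place, everything else is a faithful translation between the vector and scalar recurrences.
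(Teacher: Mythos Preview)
Your majorizing-sequence outline is the standard and correct route to Kantorovich's theorem, and the bookkeeping you flag (matching $\|F'(x_k)^{-1}\|$ to $1/|p'(t_k)|$ via the Banach perturbation lemma) is indeed the only delicate point. Note, however, that the paper does not prove this theorem at all: it is quoted as a classical result and the proof is simply deferred to the reference~\cite{KanThm} (Gragg--Tapia). So your proposal goes well beyond what the paper itself supplies; there is nothing to compare against except that citation, and the cited source uses essentially the same majorant approach you describe.
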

In the theorem, $O(x,r)$ denotes the ball neighborhood  whose center is $x$ and radius is $r$,
and $\overline{O(x,r)}$ refers to the closure of the ball neighborhood.
The proof can be found in \cite{KanThm}.

Since the approximation $x_0$ is already the result of homotopy process,
what we care about is the initial interval w.r.t. $x_0$, i.e. the proper upper bound for $\| x^{*} - x_0 \|$.
So we have the following proposition, which is a direct corollary of the Kantorovich Theorem.

\begin{prop}\label{thm:rad}
Let $\bm{F} = (f_1,f_2,\dots,f_n)^{T}$ be a polynomial system, where $f_i \in \bm{R}[x_1,x_2,\dots,x_n]$.
Denote by $\bm{J}$ the Jacobian matrix of $\bm{F}$. For an approximation $\bm{x}_0\in\bm{C}^n$,
if the following conditions hold:
\begin{enumerate}
\item $\bm{J}^{-1}(\bm{x}_0)$ exists, and there are real numbers $B$ and $\eta$ such that
    \begin{displaymath}
    \| \bm{J}^{-1}(\bm{x}_0) \| \le B, \quad \| \bm{J}^{-1}(\bm{x}_0)\bm{F}(\bm{x}_0) \| \le \eta,
    \end{displaymath}
\item There exists a ball neighbourhood $O(\bm{x}_0,\omega)$ such that $\bm{J}(\bm{x})$ satisfies the Lipschitz condition on it:
    \begin{displaymath}
    \|\bm{J}(\bm{x})-\bm{J}(\bm{y})\| \le K\|\bm{x} - \bm{y}\|,\
        \forall\bm{x},\bm{y}\in O(\bm{x}_0,\omega)
    \end{displaymath}
\item Let $h = BK\eta$,
    \begin{displaymath}
     h  \leq \frac{1}{2}, \mbox{ and } \omega \ge \frac{1 - \sqrt{1-2h}}{h}\eta,
    \end{displaymath}
\end{enumerate}
then $\bm{F}(\bm{x}) = 0$ has only one root $\bm{x}^{*}$ in
$\overline{O(\bm{x}_0,\omega)}\cap \overline{O(\bm{x}_0,\frac{1 + \sqrt{1-2h}}{h}\eta)}$.
\end{prop}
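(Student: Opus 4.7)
The plan is to apply the Kantorovich theorem verbatim with $X = Y = \bm{C}^n$ viewed as a real Banach space (or $\bm{R}^n$ in the real-variable case) under the standard norm, the operator $F = \bm{F}$, and the open convex set $D_0 = O(\bm{x}_0,\omega)$. Since each $f_i$ is a polynomial, $\bm{F}$ is Fréchet differentiable everywhere with derivative $\bm{F}'(\bm{x}) = \bm{J}(\bm{x})$, so the hypotheses of Theorem \ref{thm:Kan} on the ambient structure are satisfied for free.

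Next I would check the three numbered conditions of Theorem \ref{thm:Kan} one by one. Condition~(1) is immediate: $\bm{J}(\bm{x}_0)^{-1}$ exists by hypothesis, and the bounds $\|\bm{J}(\bm{x}_0)^{-1}\| \le B$ and $\|\bm{J}(\bm{x}_0)^{-1}\bm{F}(\bm{x}_0)\| \le \eta$ are exactly assumption~(1) of the proposition. Condition~(2), the Lipschitz condition for $\bm{F}'$ on $D_0$ with constant $K$, is precisely assumption~(2), where the Lipschitz neighbourhood $O(\bm{x}_0,\omega)$ has been chosen to coincide with our $D_0$. For condition~(3), the bound $h = BK\eta \le \tfrac{1}{2}$ is given, and the inclusion $O(\bm{x}_0,\tfrac{1-\sqrt{1-2h}}{h}\eta)\subset D_0$ follows directly from the second half of assumption~(3), namely $\omega \ge \tfrac{1-\sqrt{1-2h}}{h}\eta$.

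With all hypotheses verified, Theorem \ref{thm:Kan} produces a root $\bm{x}^{*}\in \overline{O(\bm{x}_0,\tfrac{1-\sqrt{1-2h}}{h}\eta)}$; since this smaller ball is contained in $\overline{O(\bm{x}_0,\omega)} = \overline{D_0}$, the root automatically lies in $\overline{O(\bm{x}_0,\omega)}$, and trivially in $\overline{O(\bm{x}_0,\tfrac{1+\sqrt{1-2h}}{h}\eta)}$ as well because $\tfrac{1-\sqrt{1-2h}}{h}\le \tfrac{1+\sqrt{1-2h}}{h}$. Uniqueness in the claimed intersection $\overline{O(\bm{x}_0,\omega)} \cap \overline{O(\bm{x}_0,\tfrac{1+\sqrt{1-2h}}{h}\eta)}$ is then a direct transcription of the last assertion of the Kantorovich theorem, since $\overline{D_0} = \overline{O(\bm{x}_0,\omega)}$.

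There is no real technical obstacle here, as the proposition is designed to be a packaging of Theorem \ref{thm:Kan} tailored to the polynomial setting. The only mild subtlety worth stating explicitly is that the Lipschitz hypothesis in the proposition is only assumed on a single ball $O(\bm{x}_0,\omega)$ rather than on an abstract open convex $D_0$, so one should point out that this ball is itself open and convex and thus a legitimate choice of $D_0$; once that identification is made, the conclusion of Theorem \ref{thm:Kan} transfers without modification.
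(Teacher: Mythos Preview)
Your proposal is correct and follows essentially the same route as the paper: identify $\bm{F}$ as a Fr\'echet differentiable operator on $\bm{C}^n$ with derivative $\bm{J}$, take $D_0 = O(\bm{x}_0,\omega)$, and read off the conclusion from Theorem~\ref{thm:Kan}. The paper's own proof is terser---it states the Fr\'echet differentiability via the Taylor expansion and then simply invokes Theorem~\ref{thm:Kan}---whereas you spell out the verification of each hypothesis and the existence/uniqueness conclusions explicitly, but the substance is the same.
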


\begin{proof}
We consider $F$ as an operator on $\bm{C}^n\rightarrow\bm{C}^n$,
obviously it is Fr\'echet differentiable, and from
\begin{displaymath}
    \bm{F}(\bm{x}+\bm{h}) = \bm{F}(\bm{x}) + \bm{J}(\bm{x})\bm{h} + o(\bm{h})
\end{displaymath}
we can get
\begin{displaymath}
    \lim_{\bm{h}\rightarrow 0}
    { \frac{\|\bm{F}(\bm{x}+\bm{h}) - \bm{F}(\bm{x}) - \bm{J}(\bm{x})\bm{h}\|}{\|\bm{h}\|} } = 0.
\end{displaymath}
Thus the first order Fr\'echet derivative of $\bm{F}$ is just the Jacobian matrix $\bm{J}$, i.e. $\bm{F}'(\bm{x}) = \bm{J}(\bm{x})$.
So by Theorem \ref{thm:Kan}, the proof is completed immediately after checking the situation of $\| \bm{x^{*}} - \bm{x_0} \|$.
\end{proof}

It is easy to know that
$\frac{1 - \sqrt{1-2h}}{h} \le 2$.
So we can just assign $\omega = 2\eta$. Then we need to check whether
$BK\eta \leq \frac{1}{2}$ in the neighborhood $O(\bm{x}_0,2\eta)$. Even
though the initial $\bm{x}_0$ does not satisfy the conditions, we
can still find a proper $\bm{x}_k$ after several Newton iterations,
since $B$ and $K$ are bounded and $\eta$ will approach zero.
And we only need to find an upper bound for the Lipschitz constant $K$.

\subsubsection{Constructive algorithm}

Now we will give a constructive procedure for the Lipschitz constant
$K$.

Let $J_{ij} = \partial{f_i}/\partial{x_j}$,
apply mean value inequality\cite{mean-value-ineqn} to each element of $\bm{J}$ on $O(\bm{x}_0,\omega)$ to get
    \begin{eqnarray}\label{eq:mean-value}
    \|J_{ij}(\bm{y}) - J_{ij}(\bm{x})\| \le
    \sup_{\kappa_{ij}\in line(\bm{x},\bm{y})}{\| \nabla{J_{ij}(\kappa_{ij})} \|}
    \cdot \|\bm{y} - \bm{x}\|,\ \forall \bm{x},\bm{y}\in O(\bm{x}_0,\omega)
    \end{eqnarray}
where $\nabla = (\partial/\partial{x_1},\partial/\partial{x_2},\dots,\partial/\partial{x_n})$ is the gradient operator
and $line(\bm{x},\bm{y})$ refers to the line connecting $\bm{x}$ with $\bm{y}$.
Since $\nabla{J}$ is continous, we can find a $\zeta_{ij}\in line(\bm{x}, \bm{y})$ such that
$\|\nabla{J_{ij}}(\zeta_{ij})\| = \sup_{\kappa_{ij}\in line(\bm{x},\bm{y})}{\| \nabla{J_{ij}(\kappa_{ij})} \|}$.
So we get
    \begin{eqnarray}
    \|J_{ij}(\bm{y}) - J_{ij}(\bm{x})\| \le
    \| \nabla{J_{ij}(\zeta_{ij})} \|
    \cdot \|\bm{y} - \bm{x}\|,\ \forall \bm{x},\bm{y}\in O(\bm{x}_0,\omega)
    \end{eqnarray}
Setting $\big(\|\nabla{J_{ij}(\zeta_{ij})}\|\cdot\|\bm{y} - \bm{x}\|\big)_{n\times n} = \triangle{\bm{J}}$, then
$\|J(y) - J(x)\| \le \|\triangle{\bm{J}}\|$.
And for $\triangle{\bm{J}}$ we have
    \begin{eqnarray} \label{eq:est1}
    \|\triangle{\bm{J}}\|_{\infty}
    & = & \| \big( \|\nabla{J_{ij}(\zeta_{ij})}\|_{\infty}
    \|\bm{y} - \bm{x}\|_{\infty} \big)_{n\times n} \|_{\infty} \nonumber\\
    & \le & \| \big( \|\nabla{J_{ij}(\zeta_{ij})}\|_{\infty} \big)_{n\times n} \|_{\infty}
    \cdot \|\bm{y} - \bm{x}\|_{\infty}\nonumber\\
    & = & \max_{1\le i\le n}{\sum_{j=1}^{n}{\|\nabla{J_{ij}(\zeta_{ij})}\|_{\infty}}}
    \cdot \|\bm{y} - \bm{x}\|_{\infty}
    \end{eqnarray}
Note that $\nabla{J_{ij}(\zeta_{ij})}$ is a vector, so if we use
$|\cdot|_{\max}$ to denote the maximum module component of a vector,
then we have
    \begin{eqnarray}\label{eq:le1}
    \|\triangle{\bm{J}}\|_{\infty} & \le &
    \max_{1\le i\le n}{\sum_{j=1}^{n}{|\nabla{J_{ij}(\zeta_{ij})}|_{\max}}}
    \cdot\|\bm{y}-\bm{x}\|_{\infty}.
    \end{eqnarray}

Let $H_i =
(\frac{\partial^{2}{f_i}}{\partial{x_j}\partial{x_k}})_{n\times n}$
be the Hessian matrix of $f_i$, and let $H_i =
(h_{1}^{(i)},\dots,h_{n}^{(i)})$, where $h_{j}^{(i)}$
are the column vectors. Then we have
    \begin{eqnarray}\label{eq:dJ-estimate}
    \| \triangle{\bm{J}} \|_{\infty}
    \le \max_{1\le i\le n}{ \sum_{j=1}^{n}{ |h_{j}^{(i)}(\zeta_{ij})|_{\max} } }
    \cdot \|\bm{y} - \bm{x}\|_{\infty}.
    \end{eqnarray}

For convenience, we construct $\bm{X}_0 = \midrad(\bm{x}_0,\omega)$
with $\bm{x}_0$ as centre and $\omega = 2\eta$ as radius.

Now we have $|h_{j}^{(i)}(\zeta_{ij})|_{\max} \le |h_{j}^{(i)}(\bm{X}_0)|_{\max}$. So
    \begin{eqnarray}
    \| \triangle{\bm{J}} \|_{\infty}
    & \le & \max_{1\le i\le n}{ \sum_{j=1}^{n}{ |h_{j}^{(i)}(\bm{X}_0)|_{\max} } }
    \cdot \|\bm{y} - \bm{x}\|_{\infty}.
    \end{eqnarray}
Therefore
    \begin{eqnarray}\label{eq:LipConst}
    K = \max_{1\le i\le n}{ \sum_{j=1}^{n}{ |h_{j}^{(i)}(\bm{X}_0)|_{\max} } }
    \end{eqnarray}
is the Lipschitz constant  w.r.t. $\bm{J}$.

Now we give an algorithm for computing initial intervals in
Algorithm \ref{alg:initwidth}.

\begin{algorithm}[htb]
\caption{init\_width}
\label{alg:initwidth}
\begin{algorithmic}[1]
\REQUIRE~~ Equation $F$; Approximation $x_0$; Number of variables $n$
\ENSURE~~ Initial interval's radius $r$

\REPEAT
    \STATE $\bm{x}_0 = \bm{x}_0 - \bm{J}^{-1}(\bm{x}_0)\bm{F}(\bm{x}_0)$;
    \STATE $B = \|\bm{J}^{-1}(\bm{x}_0)\|_{\infty}$;\
        $\eta = \|\bm{J}^{-1}(\bm{x}_0)\bm{F}(\bm{x}_0)\|_{\infty}$;
    \STATE $\omega = 2\eta$;
    \STATE $\bm{X}_0 = \midrad(\bm{x}_0,\omega)$;
    \STATE $K = 0$;
    \FOR{$i = 1\ \TO\ n $}
        \STATE Compute the Hessian matrix
            $H_i = (h_{1}^{(i)},h_{2}^{(i)},\dots,h_{n}^{(i)})$ of $\bm{F}$ on $\bm{X}_0$;
        \IF {$\sum_{j=1}^{n}{ |h_{j}^{(i)}(\bm{X}_0)|_{\max} } > K$}
            \STATE $K = \sum_{j=1}^{n}{ |h_{j}^{(i)}(\bm{X}_0)|_{\max} }$;
        \ENDIF
    \ENDFOR
    \STATE $h = BK\eta$;
\UNTIL{$h\le 1/2$}
\RETURN $r = \frac{1-\sqrt{1-2h}}{h}\eta$
\end{algorithmic}
\end{algorithm}

\subsection{Empirical estimation}

As so far, we have established a rigorous method to construct
initial intervals. This method takes a complex approximate zero
as input to obtain an initial box. But in practice we often find
many approximations with ``large'' imaginary parts which strongly
indicate that they are non-real. A natural question is
\begin{quote}
    Can we detect these non-real roots without using interval arithmetic?
\end{quote}

Let $\bm{z}$ be an approximation of the real root $\bm{\xi}$. Because
\begin{displaymath}
\| \mathfrak{Re(\bm{z})} - \bm{\xi} \| \le \| \bm{z} - \bm{\xi} \|,
\end{displaymath}
then we can see the real part $\mathfrak{Re(\bm{z})}$ is also an approximation of this root
and is even closer.
So we can simply replace $\bm{x_0}$ by $\mathfrak{Re(\bm{x_0})}$ in Algorithm \ref{alg:initwidth} to construct the initial box.

The other consideration is the efficiency of numerical computation.
When we use Proposition \ref{thm:rad}, lots of interval matrix
operations would be executed, which cost much time than the point
operations. So if we can find an empirical estimate radius, which can be
computed much faster, but is still valid for most of the equations,
then that will be a good choice in practice.

We now give one such empirical estimation.

For $\bm{F} = 0$, let $\bm{x}^{*}$ be an accurate root and $\bm{x}_0$ be its approximation.
Although the mean value theorem is not valid in complex space, the Taylor expansion is still valid.
And the polynomial systems considered here are all continuous,
so we suppose the equation satisfies the mean value theorem approximately:
\begin{eqnarray}
    0 = \bm{F}(\bm{x}^{*}) \approx \bm{F}(\bm{x}_0) + \bm{J}(\bm{\xi})(\bm{x}^{*} - \bm{x}_0)
\end{eqnarray}
where $\bm{\xi}$ is between $\bm{x}^{*}$ and $\bm{x}_0$. So we have
\begin{eqnarray*}
    \bm{x}^{*} - \bm{x}_0 \approx - \bm{J}^{-1}(\bm{\xi})\bm{F}(\bm{x}_0).
\end{eqnarray*}
Let $\bm{J}(\bm{\xi}) = \bm{J}(\bm{x}_0) + \triangle{\bm{J}}$, then
\begin{eqnarray}\label{eq:est2}
    \bm{J}(\bm{\xi}) = \bm{J}(\bm{x}_0)(I + \bm{J}^{-1}(\bm{x}_0)\triangle{\bm{J}}),\nonumber\\
    \bm{J}^{-1}(\bm{\xi}) = (I + \bm{J}^{-1}(\bm{x}_0)\triangle{\bm{J}})^{-1}\bm{J}^{-1}(\bm{x}_0).
\end{eqnarray}
For $\triangle{\bm{J}}$, we can get an estimation similar to Formula (\ref{eq:dJ-estimate}):
\begin{eqnarray}
    \| \triangle{\bm{J}} \|_{\infty}
    \le \max_{1\le i\le n}{ \sum_{j=1}^{n}{ |h_{j}^{(i)}(\zeta_{ij})|_{\max} } }
    \cdot \|\bm{x}^{*} - \bm{x}_0\|_{\infty}.
\end{eqnarray}
From our hypothesis, $\bm{x}^{*}$ and $\bm{x}_0$ are very close, so are $\zeta_{ij}$ and $\bm{x}_0$.
Thus, we approximate $\bm{x}_0$ with $\zeta_{ij}$.
Meanwhile, from $\bm{x}_0$, after a Newton iteration,
we get $\bm{x}_1 = \bm{x}_0 - \bm{J}^{-1}(\bm{x}_0)\bm{F}(\bm{x}_0)$.
Thus we may consider that the distance between $\bm{x}^{*}$ and $\bm{x}_0$
is more or less the same with that of $\bm{x}_0$ and $\bm{x}_1$,
so we replace $\|\bm{x}^{*} - \bm{x}_0\|$
with $\| \bm{x}_1 - \bm{x}_0 \|= \| \bm{J}^{-1}(\bm{x}_0)\bm{F}(\bm{x}_0) \|$ for approximation.

So we get
\begin{eqnarray}
    \| \triangle{\bm{J}} \|_{\infty}
    \le \max_{1\le i\le n}{ \sum_{j=1}^{n}{ |h_{j}^{(i)}(\bm{x}_0)|_{\max} } }
    \cdot \| \bm{J}^{-1}(\bm{x}_0)\bm{F}(\bm{x}_0) \|_{\infty}.
\end{eqnarray}
Let $\lambda = \max_{1\le i\le n}{ \sum_{j=1}^{n}{ |h_{j}^{(i)}(\bm{x}_0)|_{\max} } }$, then
\begin{displaymath}
    \| \bm{J}^{-1}(\bm{x}_0)\triangle{\bm{J}} \|_{\infty} \le
        \lambda \| \bm{J}^{-1}(\bm{x}_0)\|_{\infty}^{2} \|\bm{F}(\bm{x}_0) \|_{\infty}.
\end{displaymath}
Because $\|\bm{F}(\bm{x}_0) \|_{\infty} \ll 1$, the last formula is also far less than 1.
So substitute that into Formula (\ref{eq:est2}) we can get
\begin{eqnarray}
    \|\bm{J}^{-1}(\bm{\xi})\|_{\infty}  \le
    \frac{\|\bm{J}^{-1}(\bm{x}_0)\|_{\infty}}
    {1 - \lambda \| \bm{J}^{-1}(\bm{x}_0)\|_{\infty}^{2} \|\bm{F}(\bm{x}_0) \|_{\infty}}.
\end{eqnarray}
Finally we obtain the empirical estimation
\begin{eqnarray}\label{eq:rad-estimate}
    \|\bm{x}^{*}-\bm{x}_0\|_{\infty} &\approx& \|\bm{J}^{-1}(\bm{\xi})\bm{F}(\bm{x}_0)\|_{\infty}\nonumber\\
    &\le& \frac{\|\bm{J}^{-1}(\bm{x}_0)\|_{\infty} \|\bm{F}(\bm{x}_0)\|_{\infty}}
    {1 - \lambda \| \bm{J}^{-1}(\bm{x}_0)\|_{\infty}^{2} \|\bm{F}(\bm{x}_0) \|_{\infty}}.
\end{eqnarray}

Notice that the inequality (\ref{eq:rad-estimate}) is only a non-rigorous estimation.
All the computation in it are carried out in a point-wise way, so it is faster than Proposition \ref{thm:rad}.
In the numerical experiments later we will see that this empirical estimate radius performs very well.
So we can use it to detect those non-real roots rather than the interval arithmetic.
We describe that in Algorithm \ref{alg:iscomplex}.

\begin{algorithm}[htb]
\caption{iscomplex}
\label{alg:iscomplex}
\begin{algorithmic}[1]
\REQUIRE~ Equation $F$; Approximation $\bm{z}$;
\ENSURE~ \code{true} ($\bm{z}$ must be non-real), or \code{false} ($\bm{z}$ may be real).

\STATE Compute Formula (\ref{eq:rad-estimate}), denote the result by $r'$;
\IF{ \code{any}( $\mathfrak{Im}(\bm{z}) > r'$ ) }
    \RETURN \code{true};
    \COMMENT{not a real root, continue to judge others}
\ELSE
    \RETURN \code{false};
    \COMMENT{may be a real root, call interval arithmetic to verify}
\ENDIF
\end{algorithmic}
\end{algorithm}
In Algorithm \ref{alg:iscomplex}, \code{any}() is a default function in Matlab,
which returns true if there is non-zero component in a vector.

\subsection{Krawczyk-Moore interval iteration}

We now discuss about the real root verification with a given interval.
In section \ref{section:K-intro}, we have introduced the Krawczyk operator.
With the properties in Proposition \ref{thm:Kprop},
we can determine whether an interval contains a real root
by the relationship of the original interval and the one after the Krawczyk iteration.

However, in practice, we can't expect the intervals to be entire inclusion or disjoint
after just one iteration.
Partly intersection is the most common cases that we encounter.
Since the real root is still in the interval after the Krawczyk iteration,
a normal method is to let $\bm{X}\cap K(\bm{X})$
be the new iteration interval.
So suppose $\bm{X}^{(0)}$ is the initial interval,
the iteration rule is $\bm{X}^{(k+1)} = \bm{X}^{(k)}\cap K(\bm{X}^{(k)})$,
where $K(\bm{X}^{(k)})$ is defined by Formula (\ref{eq:K-Moore}).
This update rule can make sure that the size of $\bm{X}^{(k)}$ is non-increasing.
But a problem is once we encounter $K(\bm{X}^{(k)})\cap\bm{X}^{(k)} == \bm{X}^{(k)}$,
the iteration will be trapped into endless loop.
So we have to divide $\bm{X}^{(k)}$ if this happened.

Thus, we introduce a bisection function \code{divide}().
To ensure the convergence of our algorithm, we divide the longest dimension of an interval vector.

This strategy may not be the optimal choice when the system's dimension is high.
Greedy method or optimization algorithm will be studied in future work.

We now give a formal description of \code{divide} function in Algorithm \ref{alg:divide}
and the Krawczyk-Moore iteration process in Algorithm \ref{alg:Krawczyk}.
\begin{algorithm}[htb]
\caption{divide}
\label{alg:divide}
\begin{algorithmic}[1]
\REQUIRE~ Interval vector $\bm{X}$
\ENSURE~ $\bm{X}^{(1)}$ and $
\bm{X}^{(2)}$, a decomposition of $\bm{X}$

\STATE Let $\bm{X}_i$ be the coordinate with the largest width in $\bm{X}$
\STATE $\bm{X}^{(1)} = \bm{X};\ \bm{X}^{(2)} = \bm{X}$;
\STATE $\bm{X}^{(1)}_i = [\inf(\bm{X}_i),\mmid(\bm{X}_i)]$;
\STATE $\bm{X}^{(2)}_i = [\mmid(\bm{X}_i),\sup(\bm{X}_i)]$;
\RETURN $\bm{X}^{(1)},\ \bm{X}^{(2)}$
\end{algorithmic}
\end{algorithm}

\begin{algorithm}[htb]
\caption{Krawczyk}
\label{alg:Krawczyk}
\begin{algorithmic}[1]
\REQUIRE~ $F$; initial box $\bm{X}$; isolation boxes $real\_roots$; number of real roots $nreal$
\ENSURE~ symbol of whether there is a real root $flag$; $real\_roots$; $nreal$

\STATE $Y = \mmid(\bm{F}'(\bm{X}))^{-1}$; $\bm{X}_t = K(\bm{X})$, where $K(\bm{X})$ is define by Formula (\ref{eq:K-Moore});
\IF{$\bm{X}_t\cap\bm{X} == \emptyset$}
    \RETURN $flag = \code{false}$;
\ENDIF
\WHILE{\NOT($\bm{X}_t\subseteq\bm{X}$)}
    \IF{$\bm{X}_t\cap\bm{X} == \bm{X}$}
        \STATE $[\bm{X}^{(1)},\bm{X}^{(2)}] = \code{divide}(\bm{X})$;
        \STATE $[f1,real\_roots,nreal] = \code{Krawczyk}(F,\bm{X}^{(1)},real\_roots,nreal)$;
        \IF{$f1 == \code{false}$}
            \STATE $[f2,real\_roots,nreal] = \code{Krawczyk}(F,\bm{X}^{(2)},real\_roots,nreal)$;
        \ENDIF
        \RETURN $f1\ \OR\ f2$;
    \ENDIF
    \STATE $\bm{X} = \bm{X}_t\cap\bm{X}$;
    \STATE $Y = ( \mmid\bm{F}'(\bm{X}))^{-1}$.
    \STATE $\bm{X}_t = K(\bm{X})$;
    \IF{$\bm{X}_t\cap\bm{X} == \emptyset$}
        \RETURN flag = \code{false};
    \ENDIF
\ENDWHILE
\STATE $nreal = nreal + 1$;
\STATE $real\_roots[nreal] = \bm{X}_t$
\RETURN $flag = \code{true}, real\_roots, nreal$;
\end{algorithmic}
\end{algorithm}

\subsection{Verification and refinement}

After the Krawczyk iteration, we already have all the real root isolated intervals,
but these are not the final results.
Since we require an isolation of disjoint intervals, we have to check the possible overlaps.

On the other hand, some intervals may not be as small as required by users, so we can
narrow them via bisection method until they match the requirement.

We discuss these details in this subsection.

\subsubsection{Remove the overlaps}
There is a basic hypothesis: for non-singular systems, each root has
an approximation, and from this approximation, the iteration will
end up in its corresponding accurate root, not any other zero. So we
only have to remove the overlaps, and the number of real roots won't
change.

However, we want to expand our algorithm into multi-roots cases.
And in that situation, it is possible that two isolated intervals contain the same real zero.
So whether or not the overlap part contains a real root, our algorithm has its corresponding processes.
See Algorithm \ref{alg:disjoint} for details.
\begin{algorithm}[htb]
\caption{disjoint\_process}
\label{alg:disjoint}
\begin{algorithmic}[1]
\REQUIRE~ Isolated intervals $real\_roots$; number of real roots $nreal$; $F$
\ENSURE~ Checked isolated intervals $real\_roots$; $nreal$

\STATE $k = 0$;
\FOR{$i = 1\ \TO\ nreal$}
    \STATE $\bm{X} = real\_roots[i]\ $; $new\_root = \code{true}$;
    \FOR{$j = 1\ \TO\ k$}
        \STATE $\bm{Y} = real\_roots[j]$;
        \STATE $\bm{Z} = \bm{X}\cap\bm{Y}$;
        \IF{$\bm{Z} == \emptyset$}
            \STATE \CONTINUE;
        \ENDIF
        \STATE $flag$ = \code{Krawczyk}($F$,$\bm{Z}$);
        \IF{$flag$ == \code{true}}
            \STATE $new\_root = \code{false}\ $; \BREAK;
        \ELSE
            \STATE $\bm{X} = \bm{X}\setminus\bm{Z}$;
            \STATE $real\_roots[j] = real\_roots[j]\setminus\bm{Z}$;
        \ENDIF
    \ENDFOR
    \IF{$new\_root == \code{true}$}
        \STATE $k = k + 1\ $; $real\_roots[k] = \bm{X}$
    \ENDIF
\ENDFOR
\RETURN $real\_roots\ $,$nreal = k$;
\end{algorithmic}
\end{algorithm}

The function \code{Krawczyk}() in Algorithm \ref{alg:disjoint} is a
little bit different from that in the Krawczyk-Moore iteration. In
the Krawczyk-Moore iteration, we have to store the information of
isolated real root intervals, so the $real\_roots$ and $nreal$ are
in the function arguments. However, we only need to know whether
there is a real root here, so only the symbol variable $flag$ is
returned. The situation is the same in Algorithm
\ref{alg:narrowing}.

\subsubsection{Narrow the width of intervals}
As we said in the introduction, the real root isolation can be
viewed as a kind of solving equations. And the width of the isolated
intervals is just like the accuracy of solutions. Similar to the
former algorithms, we can require the program to return an answer in
specified range. The difference is the symbolic algorithm can get
any precision they want in theory, but our floating point number
calculation can't beat the machine precision. In fact, in the Matlab
environment that we implement our algorithm, the resulted width
won't be smaller than the system zero threshold\footnote[1]{
    In Matlab2008b that we do the experiments, the zero threshold is 2.2204e-016.
}.
\begin{algorithm}[h]
\caption{narrowing}
\label{alg:narrowing}
\begin{algorithmic}[1]
\REQUIRE~ Isolated intervals $real\_roots$; Number of real roots $nreal$; $F$; Threshold $\tau$
\ENSURE~ $real\_roots$ after bisection

\FOR{$i = 1\ \TO\ nreal$}
    \STATE $\bm{X} = real\_roots[i]$;
    \WHILE{\code{any}($\rad(\bm{X}) > \tau$)}
        \STATE $[\bm{Y}^{(1)},\bm{Y}^{(2)}] = \code{divide}(\bm{X})$;
        \STATE $flag = \code{Krawczyk}(F,\bm{Y}^{(1)})$;
        \IF{$flag == \code{true}$}
            \STATE $\bm{X} = \bm{Y}^{(1)}$;
        \ELSE
            \STATE $\bm{X} = \bm{Y}^{(2)}$
        \ENDIF
    \ENDWHILE
    \STATE $real\_roots[i] = \bm{X}$;
\ENDFOR
\RETURN $real\_roots$
\end{algorithmic}
\end{algorithm}

We also use bisection to do the narrowing job.
Since there is only one root in the interval,
we only have to continue dividing and checking the half that contains that zero.
Formal description is in Algorithm \ref{alg:narrowing}.

\begin{algorithm}[htb]
\caption{real\_root\_isolate}
\label{alg:main}
\begin{algorithmic}[1]
\REQUIRE~ Equation $F(\bm{x)}$; number of variables $n$; Threshold $\tau$;
\ENSURE~ Isolated intervals of $F(\bm{x}) = 0$ and number of real roots $nreal$

\STATE [$complex\_roots,ncomplex$] = \code{hom4ps}($F$,$n$);\\
\STATE Initialize $real\_roots$ to be empty; $nreal = 0$;
\FOR{$i = 1$ $\TO$  $ncomplex$}
    \STATE $\bm{z}$ = $complex\_roots[i]$;
    \IF{$\code{iscomplex}(F,\bm{z})$}
        \STATE \CONTINUE;
    \ENDIF
    \STATE $r$ = \code{init\_width}[$F$,$\bm{z}$,$n$];
    \STATE $X_0 = \midrad(\mathfrak{Re}(\bm{z}),r)$;
    \STATE [$flag,real\_roots,nreal$] = \code{Krawczyk}($F,X_0,real\_roots,nreal$);
\ENDFOR
\STATE $[real\_roots,nreal] = \code{disjoint\_process}(real\_roots,nreal,F)$;
\STATE $real\_roots = \code{narrowing}(real\_roots,nreal,F,\tau)$;
\RETURN $real\_roots,\ nreal$;
\end{algorithmic}
\end{algorithm}

\subsection{Algorithm description}
Up to now, we have discussed all the parts of real root isolation algorithm in detail.
We give 
the final main program in Algorithm \ref{alg:main}.

\section{Experiments}

Now we apply our new method to some polynomial systems
and do some comparison with some former algorithms.

All the experiments are undertaken in Matlab2008b, with Intlab \cite{intlab} of Version 6.
For arbitrarily high accuracy, we can call Matlab's vpa (variable precision arithmetic), but in fact all the real roots of the examples below are isolated by using Matlab's
default double-precision floating point. We use Hom4ps-2.0 \cite{hom4ps} as our homotopy continuation tool to obtain initial approximate zeros.
Since computation time will be listed below, the computer information is also given here:
OS: Windows Vista, CPU: Inter\textregistered Core 2 Duo T6500 2.10GHz, Memory: 2G.

\subsection{Demo example}
We begin our illustration with a simple example.
\begin{eg}
Consider the real root isolation of the system below.
\begin{displaymath}
    \left\{ \begin{array}{rcl}
        x^{3} y^{2} + x + 3 & = & 0\\
        4 y z^{5} + 8 x^{2} y^{4} z^{4} - 1 & = & 0\\
        x + y + z - 1 & = & 0
        \end{array} \right.
\end{displaymath}
\end{eg}
The homotopy program tells us this system has 28 complex zeros in total.
And we get the following results after calling our \code{real\_root\_isolate} program.\\
{\small \code{
intval  =\\
{}[  -0.94561016957416,  -0.94561016957415]\\
{}[   1.55873837303161,   1.55873837303162]\\
{}[   0.38687179654254,   0.38687179654255]\\
intval  =\\
{}[  -1.18134319868123,  -1.18134319868122]\\
{}[  -1.05029487815439,  -1.05029487815438]\\
{}[   3.23163807683560,   3.23163807683561]\\
intval  =\\
{}[  -2.99999838968782,  -2.99999838968781]\\
{}[   0.00024421565895,   0.00024421565896]\\
{}[   3.99975417402886,   3.99975417402887]\\
intval  =\\
{}[  -0.79151164911096,  -0.79151164911095]\\
{}[   2.11038450699949,   2.11038450699950]\\
{}[  -0.31887285788855,  -0.31887285788854]\\
The order of variables:\\
    'x'\\
    'y'\\
    'z'\\
The number of real roots: 4
}
}

We verify the answers above with the DISCOVERER \cite{discoverer} package under Maple,
which also return $4$ isolated real roots. Here we show its output in floating point number format, i.e.\\{\small {}
[[-2.999998391, -2.999998389], [0.2442132427e-3, 0.2442180230e-3], [3.999754090, 3.999754249]], \\{}
[[-1.181343199, -1.181343199],[-1.050294975, -1.050294818],\allowbreak [3.231637836, 3.231638372]], \\{}
[[-.9456101805, -.9456101656], [1.558738033, 1.558738728],[.3868716359, .3868719935]], \\{}
[[-.7915116549, -.7915116400],[2.110384024, 2.110385000], [-.3188729882, -.3188727498]].\\{}
}
And we can see the answers perfectly match the ones of our program.

We list some information during the calculation of our algorithm here for reference.
Only the 4 real ones are given, and the other non-real ones
are all detected by our empirical estimation method.
We mention that all the imaginary parts of complex roots are significant larger
than the initial radius of our algorithm in order of magnitude in this example.

\begin{table}[htb]
\begin{tabular}{|c|c|c|c|c|}
\hline
        &   root1  &   root2    &   root3  &   root4\\
\hline
$B$      &   1.060227  &  1.192159   &  2.000864  &   0.874354\\
\hline
$K$     &   14.941946  &   7.198937e+003  & 4.095991e+003  &  16.988990 \\
\hline
$\eta$     &  4.260422e-016  &  4.20807e-016   &  8.882333e-016 &  5.764449e-016 \\
\hline
$h$    &  2.024791e-014   &  1.083446e-011    &  2.183861e-011 &  2.568823e-014 \\
\hline
estimate-rad        &  4.274976e-016   &  4.208067e-016  &  8.882344e-016   &  5.779921e-016 \\
\hline
empirical-rad        &  1.015249e-015   &  1.29164e-012  &  2.156138e-015 &  1.559270e-015 \\
\hline
\end{tabular}
\caption{Key quantities comparison}
\label{tab:demo}
\end{table}

We give some remarks on Table \ref{tab:demo}.
In the first row, \emph{root1} to \emph{root4} are refer to
the 4 real roots mentioned above respectively.
And $B,K,\eta,h$ are exactly the same as they are defined in algorithm \ref{alg:initwidth}.
The \emph{estimate-rad} are the radius obtained via algorithm \ref{alg:initwidth},
while the \emph{empirical-rad} are refer to the ones calculated by Formula (\ref{eq:rad-estimate}).

We say a little more words about the \emph{empirical-rad}.
Firstly, although the empirical ones are basically larger than the rigorous error radius,
they are still small enough, which hardly have any influence on the efficiency of interval iteration.
We will see this in the comparison experiments later.
But avoiding of interval matrix computation is very helpful to the algorithm.
Secondly, the radius obtained from Algorithm \ref{alg:initwidth} are so small that
they are even comparable to the zero threshold of Matlab system\footnote[1]{
  As mentioned before, the zero threshold in Matlab2008b is 2.2204e-016,
  which is almost the same order of magnitude of those radiuses.
}. And this could bring some uncertainty of floating point operation to our algorithm,
such as misjudgement of interval inclusion in Intlab, etc.
So we intend to use empirical estimation bound in next experiments.

For \emph{cyclic6} (see Appendix \ref{chp:cases}), the classic symbolic algorithm can do nothing
due to the difficulty of triangularization.
Meanwhile, we can easily get the 24 isolated real roots intervals
with our \code{real\_root\_isolate} program.

\subsection{Comparison experiment}

Many benchmarks have been checked with our \code{real\_root\_isolate} program.
Since the time complexity of both triangularization and homotopy continuation
are difficult to be analyzed, we mainly focus on the isolation results and the program execution time.

We investigate over 130 benchmarks provided by Hom4ps \cite{benchmark},
among which about 40 equations are non-singular systems.
We apply our program to these equations and all the experiments receive the right answers.
Here we list a few of them (see Appendix \ref{chp:cases} for details).

\begin{table}[htb]
\begin{tabular}{|c|c|c|c|c|}
\hline
Equation      &   total roots  &   real roots    &   DISCOVERER  &   complex roots detected\\
\hline
barry       &   20  &   2  &  2  &   18\\
\hline
cyclic5     &   70  &   10  & 10  &  60 \\
\hline
cyclic6     &  156  &  24   &  N/A &  132 \\
\hline
des18\_3    &  46   &  6    &  N/A &  40 \\
\hline
eco7        &  32   &  8  &  8   &  24 \\
\hline
eco8        &  64   &  8  &  N/A &  56 \\
\hline
geneig      &  10   &  10  & N/A  & 0  \\
\hline
kinema      &  40   &  8   & N/A  & 32  \\
\hline
reimer4     &  36   &  8   &  8  &  28 \\
\hline
reimer5     & 144   &  24  &  N/A & 120  \\
\hline
virasoro    & 256   &  224 &  N/A  & 32  \\
\hline
\end{tabular}
\caption{Real root isolation results comparison}
\label{tab:roots-ans}
\end{table}

The column \emph{real roots} in Table \ref{tab:roots-ans}
tells the number of intervals that our program isolated.
Compared with the results of DISCOVERER,
the new algorithm indeed works out all equations that are beyond
the capability of classic symbolic algorithm.
Moreover, the last column show that our empirical estimate method
detects all the non-real roots successfully.

\begin{table}[htb]
\begin{tabular}{|c|c|c|c|c|}
\hline
Equations      &   Total time  &   Homotopy time    &   Interval time  &   DISCOVERER\\
\hline
barry       &  0.421203 &  0.093601 &  0.327602 &  0.063 \\
\hline
cyclic5     &  2.948419 &  0.218401 &  2.652017 &   0.624\\
\hline
cyclic6     &  9.984064 & 0.639604  & 9.063658  & N/A  \\
\hline
des18\_3    &  4.180827 & 0.702004 &  3.385222 &  N/A \\
\hline
eco7        &  2.371215 &  0.265202 & 2.012413  &  15.881 \\
\hline
eco8        &  3.946825 &  0.499203 & 3.354022  &  N/A \\
\hline
geneig      &  4.243227 &  0.249602 & 3.868825  &  N/A \\
\hline
kinema      &  3.946825 &  1.014006 & 2.808018  &  N/A \\
\hline
reimer4     &  2.480416 &  0.374402 & 2.059213  &  24.711 \\
\hline
reimer5     &  12.963683 &  3.073220 & 9.578461  & N/A  \\
\hline
virasoro    &  137.124879 & 4.570829  & 109.996305  & N/A  \\
\hline
\end{tabular}
\caption{Execution time comparison, unit:s}
\label{tab:roots-time}
\end{table}

Table \ref{tab:roots-time} gives the comparison of program execution time.
The total time is not equal to the sum of homotopy time and interval iteration time
because we only count the CPU time, and there are other tasks such as I/O, format transform, etc.

Table \ref{tab:roots-time} also shows that interval iterations consume more time than homotopy continuation.
The reason is complicated and we enumerate some here:
\begin{enumerate}
    \item The homotopy continuation focuses only on floating-point number, while the Krawczyk iteration cares about intervals;
    \item Hom4ps-2.0 is a software complied from language C,
        which is much more efficient than the tool that we use to implement our algorithm, say Matlab.
    \item The interval iteration time increases as roots number grows since we examine the approximate zeros one by one.
        So the parallel computation of homotopy is much faster.
\end{enumerate}
We believe that with efficient language such as C/C++, and parallel computation,
the implementation of our algorithm will be much faster.

In order to verify our idea and see whether parallelization could help,
we go into every approximate zero's iteration process.
Some critical data are recorded in Table \ref{tab:roots-step}.
The \emph{avg. rad. of ans} is the average radius of the final isolated intervals,
while the \emph{avg. rad. of init.} indicates the average radius of the initial intervals.
The average time of each zero's interval iteration is shown in column \emph{avg. time of iteration}
along with the max interval iteration time in \emph{ max time of iter}.
We think the consumption for each zero's process is acceptable.

\begin{table}[htb]
\begin{tabular}{|c|c|c|c|c|}
\hline
Equation      &   avg.rad.of ans &   avg.rad.of init.    &   avg.time of iter   &   max time of iter\\
\hline
barry       &  3.552714e-015 &  1.377800e-014 &  0.054600 &  0.062400 \\
\hline
cyclic5     &  1.614703e-009 &  7.142857e-007 &  0.113881 &  0.140401 \\
\hline
cyclic6     &  4.440892e-016 &  2.137195e-015 &  0.183951 &  0.234002 \\
\hline
des18\_3    &  3.768247e-007 &  9.737288e-007 &  0.241802 &  0.296402 \\
\hline
eco7        &  1.998401e-015 &  1.483754e-013 &  0.122851 &  0.156001 \\
\hline
eco8        &  2.109424e-015 &  3.283379e-013 &  0.183301 &  0.218401 \\
\hline
geneig      &  2.664535e-016 &  5.721530e-014 &  0.315122 &  0.436803 \\
\hline
kinema      &  1.998401e-015 &  6.784427e-011 &  0.157951 &  0.218401 \\
\hline
reimer4     &  1.110223e-016 &  1.258465e-014 &  0.122851 &  0.156001 \\
\hline
reimer5     &  1.110223e-016 &  4.754080e-014 &  0.195001 &  0.421203 \\
\hline
virasoro    &  9.472120e-009 &  2.265625e-006 &  0.387844 &  0.624004 \\
\hline
\end{tabular}
\caption{Detail data for each iteration, unit:s}
\label{tab:roots-step}
\end{table}

From Table \ref{tab:roots-step} we can see that the initial interval radii are extremely small,
which leads to a nice process time for each iteration.
We point out that almost all real root checks are done by just \emph{one} Krawczyk iteration,
and hardly any overlap is found after all the Krawczyk iteration processes
due to the small initial intervals that we give.
All of these save a great deal of executing time of our program.

\section{Conclusion}

For the non-singular polynomial systems with variables' number
equals equations' number, this paper presents a new algorithm for
real root isolation based on hybrid computation. The algorithm first
applies homotopy continuation to obtain all the initial approximate
zeros of the system. For each approximate zero, an initial interval
which contains the corresponding accurate root is constructed. Then
the Krawczyk operator is called to verify all the initial intervals
so as to get all the real root isolated boxes. Some necessary check
and refinement work are done after that to ensure the boxes are
pairwise disjoint and meet width requirement.

In the construction of initial intervals, we give a rigorous radius
error bound based on the corollary of the Kantorovich theorem. Some
constructive algorithms are presented for both real and complex
approximate zeros. Meanwhile, we introduce an empirical estimate
radius, which has a nice performance in numerical experiments.

In the modification and implementation of the Krawczyk iteration
algorithm, some problems of interval arithmetic are also discussed
in this paper.

At last we utilize some existing tools to implement our algorithm
under Matlab environment. Many benchmarks have been checked along
with comparison and analysis.

We also mention some possible future work here.
The construction of initial intervals is still too complicated and
further optimization shall be studied. Also the empirical estimation
with more efficiency and accuracy is a considerable question. The
divide strategy in the Krawczyk iteration could also be improved,
which may be helpful in the high dimension cases.

In the aspect of implementation, replacing the Matlab implementation with C/C++ codes may improve the performance of our program.
Parallel computation can solve another bottleneck of our problem.
And for some small systems, or equations with special property,
the classic symbolic algorithm could be even faster.
So the tradeoff of symbolic and numerical computation is also an interesting direction.

\section*{Acknowledgements}
The work is partly supported by the ANR-NSFC project EXACTA (ANR-09-BLAN-0371-01/60911130369), NSFC-11001040, NSFC-11271034
and the project SYSKF1207 from ISCAS.
The authors especially thank professor Dongming Wang for the early
discussion on this topic in 2010 and also thank professor T. Y. Li
for his helpful suggestions and his team's work on Hom4ps2-Matlab
interface. Thank the referees for their valuable comments.


\begin{thebibliography}{99}

\bibitem{benchmark}
Benchmarks: http://hom4ps.math.msu.edu/HOM4PS\_soft\_files/equations.zip.


\bibitem{ComplexityAndRealComputation}
Blum L., Cucker F., Shub M. and Smale S.:
Complexity and Real Computation.
Springer, 1997.

\bibitem{BCLM09}
Boulier F., Chen C.B., Lemaire F. and Moreno Maza. M.:
 Real Root Isolation of Regular Chains.
In: Proceedings of the 2009 Asian Symposium on Computer Mathematics (ASCM 2009), 15--29, 2009.


\bibitem{Cheng2012843}
Cheng J.S., Gao X.S. and Guo L.L.:
Root isolation of zero-dimensional polynomial systems with linear univariate representation.
Journal of Symbolic Computation, 47(7):843--858, 2012.


\bibitem{gao07}
Cheng J.-S., Gao X.-S. and Yap C.-K.: Complete Numerical Isolation of
Real Zeros in Zero-dimensional Triangular Systems. In Proc.
ISSAC'2007, 92--99, 2007.


\bibitem{RealZerosOfPoly}
Collins G.E. and Loos R.:
Real zeros of polynomials.
In: Computer Algebra: Symbolic and Algebraic Computation,
(Buchberger B., Collins G.E. and Loos R. eds.), Springer-Verlag,
83--94, 1982.

\bibitem{DaytonLiZeng2011}
Dayton B., Li T.Y. and Zeng Z.G.:
 Multiple zeros of nonlinear systems, Math. Comp., 80:2143--2168, 2011.

\bibitem{KanThm}
Gragg G.W. and Tapia R.A.:
Optimal Error Bounds for the Newton-Kantorovich Theorem.
SIAM Journal on Numerical Analysis, 11(1):10--13, 1974.

\bibitem{Leykin2006111}
Leykin A., Verschelde J. and Zhao A.L.:
Newton's method with deflation for isolated singularities of polynomial systems.
Theoretical Computer Science, 359:111--122, 2006.

\bibitem{deflation}
Leykin A., Verschelde J. and Zhao A.L.:
Higher-order Deflation for Polynomial Systems With Isolated Singular Solutions.
In: Algorithms in Algeraic Geometry (Dickenstein, A. and Schreyer, F. and Sommese, A., eds.),
Springer, 2008, 79--97.

\bibitem{homLi}
Li T.Y.:
Numerical solution of multivariate polynomial systems by homotopy continuation methods.
Acta Numerica, 6:399--436, 1997.

\bibitem{hom4ps}
Li T.Y.: HOM4PS-2.0.
http://hom4ps.math.msu.edu/HOM4PS\_soft.htm, 2008.

\bibitem{interval}
Moore R.E., Kearfott R.B. and Cloud M.J.: Introduction to
Interval Analysis. Society for Industrial and Applied
Mathematics, Philadelphia, 2009.


\bibitem{mean-value-ineqn}
Mujica J.: Complex Analysis in Banach Spaces.
North-Holland Mathematics Studies, 120:99--138, Elsevier, 1986.

\bibitem{springerlink:10.1007/s002000050114}
Rouillier F.: Solving Zero-Dimensional Systems Through the Rational Univariate Representation.
Applicable Algebra in Engineering, Communication and Computing,
9:433--461, 1999.

\bibitem{intlab}
Rump {S.M.}: INTLAB - INTerval LABoratory.
In: Developments~in~Reliable Computing (Tibor Csendes ed.),
77--104, Kluwer Academic Publishers, 1999.
http://www.ti3.tu-harburg.de/rump/

\bibitem{Rump}
Rump S.M.: Verification methods: Rigorous results using floating-point arithmetic.
Acta Numerica, 19:287--449, 2010.


\bibitem{shenfei}
Shen F.: The Real Roots Isolation of Polynomial System Based on Hybrid Computation.
Master degree thesis, Peking University, April, 2012.


\bibitem{homotopy}
Sommese A. and Wampler C.:
The Numerical Solution of Systems of Polynomials: Arising in Engineering and Science.
World Scientific, 2005.

\bibitem{PHCpack}
Verschelde J.: PHCpack.\\
http://homepages.math.uic.edu/~jan/PHCpack/phcpack.html, 1999.

\bibitem{homlab}
Wampler C.: HomLab.
http://nd.edu/~cwample1/HomLab/main.html, 2005.


\bibitem{wuzhiISS08}
Wu X.L. and Zhi L.H.: Computing the multiplicity structure from geometric involutive form.
In: Proc. 2008 Internat. Symp. Symbolic Algebraic Comput. (ISSAC'08), 325--332, 2008.

\bibitem{discoverer}
Xia B.C.: DISCOVERER: A tool for solving semi-algebraic systems.
ACM Communications in Computer Algebra, 41(3):102--103, 2007.

\bibitem{zhang2006}
Xia B. and Zhang T.: Real Solution Isolation Using Interval Arithmetic.
Computers and Mathematics with Applications, 52:853--860, 2006.


\bibitem{xia2002}
Yang L. and Xia B.: An algorithm for isolating the real solutions of semi-algebraic systems.
Journal of Symbolic Computation, 34:461--477, 2002.






\bibitem{zhangting}
Zhang T.: Isolating Real Roots of Nonlinear Polynomial.
Master degree thesis, Peking University, 2004.

\bibitem{zhangxiaoxia}
Zhang T. and Xiao R. and Xia B.C.:
Real Solution Isolation Based on Interval Krawczyk Operator.
In: Proc. the 7th Asian Symposium on Computer Mathematics (ASCM 2005)
(Pae Sung-il and Park H. eds.), 235--237, 2005.





\end{thebibliography}



\appendix
\section{Benchmarks}\label{chp:cases}

\raggedbottom 
\begin{enumerate}
\item barry: Number of variables:3,Number of equations:3,Max degree:5
    \begin{eqnarray*}
     - x^{5} + y^{5} - 3 y - 1 & = & 0\\
    5 y^{4} - 3 & = & 0\\
     - 20 x + y - z & = & 0\\
    \end{eqnarray*}

\item cyclic5: Number of variables:5,Number of equations:5,Max degree:5
    \begin{eqnarray*}
    x_{1} + x_{2} + x_{3} + x_{4} + x_{5} & = & 0\\
    x_{1} x_{2} + x_{2} x_{3} + x_{3} x_{4} + x_{4} x_{5} + x_{1} x_{5} & = & 0\\
    x_{1} x_{2} x_{3} + x_{2} x_{3} x_{4} + x_{3} x_{4} x_{5} + x_{4} x_{5} x_{1} + x_{5} x_{1} x_{2} & = & 0\\
    x_{1} x_{2} x_{3} x_{4} + x_{2} x_{3} x_{4} x_{5} + x_{3} x_{4} x_{5} x_{1} + x_{4} x_{5} x_{1} x_{2} + x_{5} x_{1} x_{2} x_{3} & = & 0\\
    x_{1} x_{2} x_{3} x_{4} x_{5} - 1 & = & 0\\
    \end{eqnarray*}

\item cyclic6: Number of variables:6,Number of equations:6,Max degree:6
    \begin{eqnarray*}
    x_{0} + x_{1} + x_{2} + x_{3} + x_{4} + x_{5} & = & 0\\
    x_{0} x_{1} + x_{1} x_{2} + x_{2} x_{3} + x_{3} x_{4} + x_{4} x_{5} + x_{5} x_{0} & = & 0\\
    x_{0} x_{1} x_{2} + x_{1} x_{2} x_{3} + x_{2} x_{3} x_{4} + x_{3} x_{4} x_{5} + x_{4} x_{5} x_{0} + x_{5} x_{0} x_{1} & = & 0\\
    x_{0}x_{1}x_{2}x_{3} + x_{1}x_{2} x_{3} x_{4} + x_{2} x_{3} x_{4} x_{5} + x_{3} x_{4} x_{5}x_{0} & &\\
     + x_{4} x_{5} x_{0} x_{1} + x_{5} x_{0} x_{1} x_{2} & = & 0\\
    x_{0} x_{1} x_{2} x_{3} x_{4} + x_{1} x_{2} x_{3} x_{4} x_{5} + x_{2} x_{3} x_{4} x_{5} x_{0} + x_{3} x_{4} x_{5} x_{0} x_{1} & &\\
    + x_{4} x_{5} x_{0} x_{1} x_{2} + x_{5} x_{0} x_{1} x_{2} x_{3} & = & 0\\
    x_{0} x_{1} x_{2} x_{3} x_{4} x_{5} - 1 & = & 0\\
    \end{eqnarray*}

\item des18\_3: Number of variables:8,Number of equations:8,Max degree:3
    \begin{eqnarray*}
    15 a_{33} a_{10} a_{21} - 162 a_{10}^{2} a_{22} - 312 a_{10} a_{20} + 24 a_{10} a_{30} + 27 a_{31} a_{21} & &\\
     + 24 a_{32} a_{20} + 18 a_{22} a_{10} a_{32} + 30 a_{22} a_{30} + 84 a_{31} a_{10} & = & 0\\
    28 a_{22} a_{10} a_{33} + 192 a_{30} + 128 a_{32} a_{10} + 36 a_{31} a_{20} + 36 a_{33} a_{20} & &\\
     - 300 a_{10} a_{21} + 40 a_{32} a_{21} - 648 a_{10}^{2} + 44 a_{22} a_{31} & = & 0\\
    180 a_{33} a_{10} - 284 a_{22} a_{10} - 162 a_{10}^{2} + 60 a_{22} a_{32} + 50 a_{32} a_{10} & &\\
     + 70 a_{30} + 55 a_{33} a_{21} + 260 a_{31} - 112 a_{20} & = & 0\\
    6a_{33} a_{10} a_{20} + 10 a_{22} a_{10} a_{31} + 8 a_{32} a_{10} a_{21} - 162 a_{10}^{2} a_{21} & &\\
     + 16 a_{21} a_{30} + 14 a_{31} a_{20} + 48 a_{10} a_{30} & = & 0\\
    4 a_{22} a_{10} a_{30} + 2 a_{32} a_{10} a_{20} + 6 a_{20} a_{30} - 162 a_{10}^{2} a_{20} + 3 a_{31} a_{21} a_{10} & = & 0\\
    66 a_{33} a_{10} + 336 a_{32} + 90 a_{31} + 78 a_{22} a_{33} - 1056 a_{10} - 90 a_{21} & = & 0\\
     - 240 a_{10} + 420 a_{33} - 64 a_{22} + 112 a_{32} & = & 0\\
    136 a_{33} - 136 & = & 0\\
    \end{eqnarray*}

\item eco7: Number of variables:7,Number of equations:7,Max degree:3
    \begin{eqnarray*}
    x_{7} x_{1} + x_{7} x_{1} x_{2} + x_{7} x_{2} x_{3} + x_{7} x_{3} x_{4} + x_{7} x_{4} x_{5} + x_{7} x_{5} x_{6} - 1 & = & 0\\
    x_{7} x_{2} + x_{7} x_{1} x_{3} + x_{7} x_{2} x_{4} + x_{7} x_{3} x_{5} + x_{7} x_{6} x_{4} - 2 & = & 0\\
    x_{7} x_{3} + x_{7} x_{1} x_{4} + x_{7} x_{2} x_{5} + x_{7} x_{6} x_{3} - 3 & = & 0\\
    x_{7} x_{4} + x_{7} x_{1} x_{5} + x_{7} x_{2} x_{6} - 4 & = & 0\\
    x_{7} x_{5} + x_{7} x_{1} x_{6} - 5 & = & 0\\
    x_{6} x_{7} - 6 & = & 0\\
    x_{1} + x_{2} + x_{3} + x_{4} + x_{5} + x_{6} + 1 & = & 0\\
    \end{eqnarray*}

\item eco8: Number of variables:8,Number of equations:8,Max degree:3
    \begin{eqnarray*}
    x_{1} x_{8} + x_{8} x_{1} x_{2} + x_{8} x_{2} x_{3} + x_{8} x_{3} x_{4} + x_{8} x_{4} x_{5} & &\\
    + x_{8} x_{5} x_{6} + x_{8} x_{6} x_{7} - 1 & = & 0\\
    x_{2} x_{8} + x_{8} x_{1} x_{3} + x_{8} x_{2} x_{4} + x_{8} x_{3} x_{5} + x_{8} x_{6} x_{4} + x_{8} x_{7} x_{5} - 2 & = & 0\\
    x_{8} x_{3} + x_{8} x_{1} x_{4} + x_{8} x_{2} x_{5} + x_{8} x_{6} x_{3} + x_{8} x_{7} x_{4} - 3 & = & 0\\
    x_{8} x_{4} + x_{8} x_{1} x_{5} + x_{8} x_{2} x_{6} + x_{8} x_{7} x_{3} - 4 & = & 0\\
    x_{8} x_{5} + x_{8} x_{1} x_{6} + x_{8} x_{7} x_{2} - 5 & = & 0\\
    x_{8} x_{6} + x_{8} x_{7} x_{1} - 6 & = & 0\\
    x_{7} x_{8} - 7 & = & 0\\
    x_{1} + x_{2} + x_{3} + x_{4} + x_{5} + x_{6} + x_{7} + 1 & = & 0\\
    \end{eqnarray*}

\item geneig: Number of variables:6,Number of equations:6,Max degree:3
    \begin{eqnarray*}
     - 10 x_{1} x_{6}^{2} + 2 x_{2} x_{6}^{2} - x_{3} x_{6}^{2} + x_{4} x_{6}^{2} + 3 x_{5} x_{6}^{2} + x_{1} x_{6} + 2 x_{2} x_{6} & &\\
      + x_{3} x_{6} + 2 x_{4} x_{6} + x_{5} x_{6} + 10 x_{1} + 2 x_{2} - x_{3} + 2x_{4} - 2x_{5} & = & 0\\
    2 x_{1} x_{6}^{2} - 11 x_{2} x_{6}^{2} + 2 x_{3} x_{6}^{2} - 2 x_{4} x_{6}^{2} + x_{5} x_{6}^{2} + 2 x_{1} x_{6} + x_{2} x_{6} & &\\
     + 2 x_{3} x_{6} + x_{4} x_{6} + 3x_{5}x_{6} + 2 x_{1} + 9 x_{2} + 3 x_{3} - x_{4} - 2 x_{5} & = & 0\\
     - x_{1} x_{6}^{2} + 2 x_{2} x_{6}^{2} - 12 x_{3} x_{6}^{2} - x_{4} x_{6}^{2} + x_{5} x_{6}^{2} + x_{1} x_{6} + 2 x_{2} x_{6} & &\\
      - 2 x_{4} x_{6} - 2 x_{5} x_{6} - x_{1} + 3 x_{2} + 10 x_{3} + 2 x_{4} - x_{5} & = & 0\\
    x_{1} x_{6}^{2} - 2 x_{2} x_{6}^{2} - x_{3} x_{6}^{2} - 10 x_{4} x_{6}^{2} + 2 x_{5} x_{6}^{2} + 2 x_{1} x_{6} + x_{2} x_{6} & &\\
     - 2 x_{3} x_{6} + 2 x_{4} x_{6} + 3 x_{5} x_{6} + 2x_{1} - x_{2} + 2x_{3} + 12x_{4} + x_{5} & = & 0\\
    3 x_{1} x_{6}^{2} + x_{2} x_{6}^{2} + x_{3} x_{6}^{2} + 2 x_{4} x_{6}^{2} - 11 x_{5} x_{6}^{2} + x_{1} x_{6} + 3 x_{2} x_{6} & &\\
    - 2 x_{3} x_{6} + 3x_{4}x_{6} + 3 x_{5} x_{6} - 2 x_{1} - 2 x_{2} - x_{3} + x_{4} + 10 x_{5} & = & 0\\
    x_{1} + x_{2} + x_{3} + x_{4} + x_{5} - 1 & = & 0\\
    \end{eqnarray*}

\item kinema: Number of variables:9,Number of equations:9,Max degree:2
    \begin{eqnarray*}
    z_{1}^{2} + z_{2}^{2} + z_{3}^{2} - 12 z_{1} - 68 & = & 0\\
    z_{4}^{2} + z_{5}^{2} + z_{6}^{2} - 12 z_{5} - 68 & = & 0\\
    z_{7}^{2} + z_{8}^{2} + z_{9}^{2} - 24 z_{8} - 12 z_{9} + 100 & = & 0\\
    z_{1} z_{4} + z_{2} z_{5} + z_{3} z_{6} - 6 z_{1} - 6 z_{5} - 52 & = & 0\\
    z_{1} z_{7} + z_{2} z_{8} + z_{3} z_{9} - 6 z_{1} - 12 z_{8} - 6 z_{9} + 64 & = & 0\\
    z_{4} z_{7} + z_{5} z_{8} + z_{6} z_{9} - 6 z_{5} - 12 z_{8} - 6 z_{9} + 32 & = & 0\\
    2 z_{2} + 2 z_{3} - z_{4} - z_{5} - 2 z_{6} - z_{7} - z_{9} + 18 & = & 0\\
    z_{1} + z_{2} + 2 z_{3} + 2 z_{4} + 2 z_{6} - 2 z_{7} + z_{8} - z_{9} - 38 & = & 0\\
    z_{1} + z_{3} - 2 z_{4} + z_{5} - z_{6} + 2 z_{7} - 2 z_{8} + 8 & = & 0\\
    \end{eqnarray*}

\item reimer4: Number of variables:4,Number of equations:4,Max degree:5
    \begin{eqnarray*}
    2 x_{1}^{2} - 2 x_{2}^{2} + 2 x_{3}^{2} - 2 x_{4}^{2} - 1 & = & 0\\
    2 x_{1}^{3} - 2 x_{2}^{3} + 2 x_{3}^{3} - 2 x_{4}^{3} - 1 & = & 0\\
    2 x_{1}^{4} - 2 x_{2}^{4} + 2 x_{3}^{4} - 2 x_{4}^{4} - 1 & = & 0\\
    2 x_{1}^{5} - 2 x_{2}^{5} + 2 x_{3}^{5} - 2 x_{4}^{5} - 1 & = & 0\\
    \end{eqnarray*}

\item reimer5: Number of variables:5,Number of equations:5,Max degree:6
    \begin{eqnarray*}
    2 x_{1}^{2} - 2 x_{2}^{2} + 2 x_{3}^{2} - 2 x_{4}^{2} + 2 x_{5}^{2} - 1 & = & 0\\
    2 x_{1}^{3} - 2 x_{2}^{3} + 2 x_{3}^{3} - 2 x_{4}^{3} + 2 x_{5}^{3} - 1 & = & 0\\
    2 x_{1}^{4} - 2 x_{2}^{4} + 2 x_{3}^{4} - 2 x_{4}^{4} + 2 x_{5}^{4} - 1 & = & 0\\
    2 x_{1}^{5} - 2 x_{2}^{5} + 2 x_{3}^{5} - 2 x_{4}^{5} + 2 x_{5}^{5} - 1 & = & 0\\
    2 x_{1}^{6} - 2 x_{2}^{6} + 2 x_{3}^{6} - 2 x_{4}^{6} + 2 x_{5}^{6} - 1 & = & 0\\
    \end{eqnarray*}

\item virasoro: Number of variables:8,Number of equations:8,Max degree:2
    \begin{eqnarray*}
    2 x_{1} x_{4} - 2 x_{1} x_{7} + 2 x_{2} x_{4} - 2 x_{2} x_{6} + 2 x_{3} x_{4} - 2 x_{3} x_{5} + 8 x_{4}^{2} & &\\
    + 8 x_{4} x_{5} + 2 x_{4} x_{6} + 2 x_{4} x_{7} + 6 x_{4} x_{8} - 6 x_{5} x_{8} - x_{4} & = & 0\\
    2 x_{1} x_{5} - 2 x_{1} x_{6} + 2 x_{2} x_{5} - 2 x_{2} x_{7} - 2 x_{3} x_{4} + 2 x_{3} x_{5} + 8 x_{4} x_{5} & &\\
    - 6 x_{4} x_{8} + 8 x_{5}^{2} + 2 x_{5} x_{6} + 2 x_{5} x_{7} + 6 x_{5} x_{8} - x_{5} & = & 0\\
     - 2 x_{1} x_{5} + 2 x_{1} x_{6} - 2 x_{2} x_{4} + 2 x_{2} x_{6} + 2 x_{3} x_{6} - 2 x_{3} x_{7} + 2 x_{4} x_{6} & &\\
     + 2 x_{5} x_{6} + 8 x_{6}^{2} + 8 x_{6} x_{7} + 6 x_{6} x_{8} - 6 x_{7} x_{8} - x_{6} & = & 0\\
     - 2 x_{1} x_{4} + 2 x_{1} x_{7} - 2 x_{2} x_{5} + 2 x_{2} x_{7} - 2 x_{3} x_{6} + 2 x_{3} x_{7} + 2 x_{4} x_{7} & &\\
     + 2 x_{5} x_{7} + 8 x_{6} x_{7} - 6 x_{6} x_{8} + 8 x_{7}^{2} + 6 x_{7} x_{8} - x_{7} & = & 0\\
    8 x_{1}^{2} + 8 x_{1} x_{2} + 8 x_{1} x_{3} + 2 x_{1} x_{4} + 2 x_{1} x_{5} + 2 x_{1} x_{6} & &\\
    + 2 x_{1} x_{7} - 8 x_{2} x_{3} - 2 x_{4} x_{7} - 2 x_{5} x_{6} - x_{1} & = & 0\\
    8 x_{1} x_{2} - 8 x_{1} x_{3} + 8 x_{2}^{2} + 8 x_{2} x_{3} + 2 x_{2} x_{4} + 2 x_{2} x_{5} & &\\
    + 2 x_{2} x_{6} + 2 x_{2} x_{7} - 2 x_{4} x_{6} - 2 x_{5} x_{7} - x_{2} & = & 0\\
     - 8 x_{1} x_{2} + 8 x_{1} x_{3} + 8 x_{2} x_{3} + 8 x_{3}^{2} + 2 x_{3} x_{4} + 2 x_{3} x_{5} & &\\
     + 2 x_{3} x_{6} + 2 x_{3} x_{7} - 2 x_{4} x_{5} - 2 x_{6} x_{7} - x_{3} & = & 0\\
     - 6 x_{4} x_{5} + 6 x_{4} x_{8} + 6 x_{5} x_{8} - 6 x_{6} x_{7} + 6 x_{6} x_{8} + 6 x_{7} x_{8} + 8 x_{8}^{2} - x_{8} & = & 0\\
    \end{eqnarray*}

\end{enumerate}

\flushbottom

\end{document}